\newtheorem{theorem}{Theorem}[section]
\newtheorem{lemma}[theorem]{Lemma}
\theoremstyle{definition}
\newtheorem{definition}[theorem]{Definition}
\newtheorem{example}[theorem]{Example}
\newtheorem{remark}[theorem]{Remark}
\newcommand{\fq}{\mathbb{F}_q}
\date{}
\title{Decoding of Matrix-Product Codes\thanks{Keywords: Linear Codes, Matrix-Product Codes, Decoding Algorithm, Minimum Distance. Subclass: 94B05; 94B35.}}
\author{Fernando Hernando\thanks{Supported in part by Spanish MEC MTM2007-64704.}\\San Diego State University\\ \texttt{fhernando@mail.sdsu.edu} \and Diego Ruano\thanks{Supported in part by the Danish National Research Foundation and the National Science Foundation of China (Grant No.11061130539) for the Danish-Chinese Center for Applications of Algebraic Geometry in Coding Theory and Cryptography, and Spanish MEC MTM2007-64704.}\\Aalborg University\\ \texttt{diego@math.aau.dk}}
\begin{document}

\maketitle

\begin{abstract}
We propose a decoding algorithm for the $(u\mid u+v)$-construction that decodes up to half of the minimum distance of the linear code. We extend this algorithm for a class of matrix-product codes in two different ways. In some cases, one can decode beyond the error correction capability of the code.\end{abstract}

\section{Introduction}

Matrix-product codes, $[C_{1}\cdots C_{s}]\cdot A$, were introduced by Blackmore and Norton in \cite{Blackmore-Norton}. They may also be seen as a generalization of the $(u\mid u+v)$-construction.  Advantages of this method are, first, that long codes can be created from old ones and, second, that the parameters or the codes are known under some conditions \cite{Blackmore-Norton,hlr,Ozbudak}. Other generalizations include \cite{hr} and \cite{van}.

In \cite{hlr}, a decoding algorithm for matrix-product codes with $C_1 \supset \cdots \supset C_s$ was presented. In this work, we present an alternative to that algorithm, where we do not need to assume that the codes $C_1, \ldots , C_s$ are nested. In section 3, we present the new algorithm for $s=l=2$, $(u \mid u+v)$-construction, the main assumption that we should consider is $d_2 \ge 2 d_1$, where $d_i$ is the minimum distance of $C_i$, $d_i =d(C_i)$. The new algorithm decodes up to half of the minimum distance. Furthermore, if $d_1$ is odd and $d_2 > 2 d_1$, we are able to decode beyond this bound, obtaining just a codeword with a high probability.

From the algorithm in section \ref{se:2}  we derive two extensions for matrix-product codes defined using a matrix $A$, of arbitrary size $s \times l$, which verifies a certain property called non-singular by columns. The main difference between these two algorithms resides in the following fact: the algorithm is section \ref{se:gene1} requires stronger assumptions ($d_i \ge l d_1$, for all $i$) than the one in section \ref{se:gene2} ($d_i \ge i d_1$,  for all $i$), but it is computationally less intense. Both algorithms decode up to half of the designed minimum distance of the code \cite{Ozbudak}, that is known to be sharp in several cases \cite{Blackmore-Norton,hlr} (for intance if $C_1 \supset \cdots \supset C_s$). If $d_1$ odd and $l$ even, we can decode beyond this bound obtaining a list of codewords that will contain just one codeword with a high probability. The algorithm in section \ref{se:gene1} does not become computationally intense for large  $s,l$.

\section{Matrix-Product Codes}\label{sect:mpc}

A matrix-product code is a construction of a code from old ones.

\begin{definition}
Let $C_1, \ldots, C_s \subset \mathbb{F}_q^m$ be linear codes of length
$m$ and a matrix $A=(a_{i,j}) \in \mathcal{M}(\fq, s \times l)$, with
$s\leq l$. The matrix-product code $C=[C_1 \cdots C_s] \cdot A$
is the set of all matrix-products $[c_1 \cdots c_s] \cdot A$ where $c_i\in
C_i$ is an $m \times 1$ column vector $c_i=(c_{1,i},\ldots,c_{m,i})^T$ for $i=1,\ldots, s$. Therefore, a typical codeword $ p$ is

\begin{equation}\label{MatrixCodeword}
p= \left(
\begin{tabular}{ccc}
$c_{1,1}a_{1,1}+\cdots+c_{1,s}a_{s,1}$ & $\cdots$ & $c_{1,1}a_{1,l}+\cdots+c_{1,s}a_{s,l}$\\
$\vdots$ & $\ddots$ & $\vdots$\\
$c_{m,1}a_{1,1}+\cdots+c_{m,s}a_{s,1}$ & $\cdots$ & $c_{m,1}a_{1,l}+\cdots+c_{m,s}a_{s,l}$\\
\end{tabular}\right).
\end{equation}
\end{definition}

The $i$-th column of any codeword is an element of the form $\sum_{j=1}^s a_{j,i}
c_j\in \mathbb{F}_q^m$, therefore reading the entries of the $m\times l$-matrix above in column-major order, the codewords can be viewed as vectors of length $ml$,
\begin{equation}\label{VectorCodeword}
 p =\left(\sum_{j=1}^s a_{j,1} c_j, \ldots , \sum_{j=1}^s a_{j,l} c_j \right)
\in\mathbb{F}_q^{ml}.
\end{equation}

If $C_i$ is an $[m,k_i,d_i]$ code then one has that
$[C_1 \cdots C_s] \cdot A$ is a linear code over $\mathbb{F}_q$ with
length $lm$ and dimension $k=k_1+\cdots+k_s$  if the matrix $A$ has
full rank and $k < k_1+\cdots+k_s$ otherwise. 

Let us denote by $R_i= (a_{i,1},\ldots,a_{i,l})$ the element of
$\mathbb{F}_q^l$ consisting of the $i$-th row of $A$, for
$i=1,\ldots,s$. We denote by $D_i$ the minimum distance
of the code $C_{R_i}$ generated by $\langle R_1,\ldots,
R_i\rangle$ in $\fq^l$. In \cite{Ozbudak} the following lower bound for the minimum distance of the matrix-product code $C$ is obtained,
\begin{equation}\label{lowebound}
d(C)\geq d_C = \min\{d_1D_1,d_2D_2, \ldots ,d_s D_s\},
\end{equation}where $d_i$ is the minimum distance of $C_i$. If $C_1, \ldots, C_s$ are nested codes, $C_1 \supset \cdots \supset C_s$, the previous bound is sharp \cite{hlr}.

In \cite{Blackmore-Norton}, the following condition for the matrix $A$ is introduced.

\begin{definition}\cite{Blackmore-Norton}\label{de:nsc}
Let $A$ be a $s\times l$ matrix and  $A_t$ be the matrix consisting of the first $t$ rows of $A$. For $1\leq j_1< \cdots < j_t\leq l$, we denote by $A(j_1,\ldots,j_t)$ the $t\times t$ matrix consisting of the columns $j_1,\ldots,j_t$ of $A_t$.

A matrix $A$ is non-singular by columns if $A(j_1,\ldots,j_t)$ is non-singular for each $1\leq t\leq s$ and $1\leq j_1< \cdots < j_t\leq l$. In particular, a non-singular by columns matrix $A$  has full rank.
\end{definition}

Moreover, if $A$ is non-singular by columns, the bound $d_C$ in (\ref{lowebound}) is$$d(C) \ge d_C = \min\{ld_1,(l-1)d_2, \ldots ,(l-s+1)d_s \}$$and it is known to be sharp in several cases: it was shown in \cite{Blackmore-Norton} that if $A$ is non-singular by columns and triangular, (i.e. it is a column permutation of an upper triangular matrix), then the bound (\ref{lowebound}) for the minimum distance is sharp. Furthermore, if A is non-singular by columns and the codes $C_1 \ldots C_s$  are nested, then this bound (\ref{lowebound}) is also sharp.

A decoding algorithm for the matrix-product code $C=[C_1 \cdots C_s] \cdot A \subset \fq^{ml}$, with $A$ non-singular by columns and $C_1 \supset \cdots \supset C_s$ was presented \cite{hlr}, assuming that we have a decoding algorithm for $C_i$, for $i=1,\ldots,s$. We present in next section another decoding algorithm for a matrix-product code with $s=l=2$.

\section{A decoding algorithm for the $(u\mid u+v)$-construction}\label{se:2}

We consider a decoding algorithm for the $(u\mid u+v)$-construction, that is, a matrix-product code with $s=l=2$, $C = [C_1 C_2]\cdot A$ with $d_2 \ge 2 d_1$ and $d_1 \ge 3$, where $d_i = d(C_i)$ is the minimum distance of $C_i$. Let \begin{equation*} A=\left(
\begin{tabular}{cc}
1 & 1 \\
0 & 1 \\
\end{tabular}\right).
\end{equation*}Note that $C$ is the $(u|u+v)$-construction and that an equivalent code will be obtained with any matrix of rank $2$.

Let $t_1$ be the error-correction capability of $C_1$, $t_1 = \lfloor \frac{d_1 -1 }{2} \rfloor  \ge 1$, that is $d_1 = 2 t_1 + 1$ if $d_1$ is odd and $d_1 = 2 t_1 + 2$ if $d_1$ is even. The minimum distance of $C$ is $d(C)=\min \{2d_1 , d_2\}= 2 d_1$ \cite{masl}. Thus the error correction capability of the code $C$ is $$t=\left \lfloor\frac{2 d_1 - 1}{2}\right \rfloor = \left\{ \begin{array}{ll} 2 t_1 & ~ \mathrm{if} ~ d_1 ~ \mathrm{is~odd}\\
2 t_1 +1 & ~ \mathrm{if} ~ d_1 ~ \mathrm{is~even}
\end{array} \right.$$ 

We provide a decoding algorithm for the matrix-product code $C$, assuming that we have a decoding algorithm $DC_i$ for $C_i$ which decodes up to $t_i$ errors, for $i=1,2$. Let $r = p + e$ be a received word where $p \in C$ and the error vector $e$ verifies $wt (e) \le t$. Note that a typical word $p \in C$ is $[c_1 c_2]\cdot A=(c_1 , c_1 + c_2)$, namely a received word $r$ is $r = (r_1, r_2) = (c_1 + e_1 , c_1 + c_2 + e_2)$.

Consider $r_2 - r_1 = c_1 + c_2 + e_2 - c_1 -e_1 = c_2 + (e_2 - e_1)$. We may decode $r_2 -r_1$ using the decoding algorithm $DC_2$ to obtain $c_2$, since $c_2 \in C_2$ and $wt(e_2 -e_1) < d_2/2$ because $$wt(e_2 -e_1) \le wt (e_1) + wt (e_2) = wt (e) \le t < d_1  \le \frac{d_2}{2}.$$

Since we know $c_2$ we may consider $r^{2)}_2 = r_2 -c_2 = c_1 + e_2$ and let $r^{2)}_1 = r_1 = c_1 + e_1$. We claim that there exists $i_1 \in \{1,2\}$ such that $wt(e_{i_1}) \le t_1$: assume that such an $i$ does not exist, then $$wt (e) = wt (e_1 ) + wt (e_2) \ge 2 t_1 + 2,$$a contradiction. Let $wt(e_{i_1}) \le t_1$, then we can obtain $c_1$ by decoding $r^{2)}_{i_1}$ with the decoding algorithm $DC_1$. A priori, we do not know which index $i_1$ is, however we will be able to detect it by checking that we have not corrected more than $\lfloor (d(C)-1)/2\rfloor$ errors in total. That is, for $p=(c_1 , c_1 + c_2)$ and $p' = (c'_1 , c'_1 + c_2)$, we check whether $d(r,p) \le t$ and $d(r,p') \le t$. 	

\begin{remark}
Let us compare this decoding algorithm to the algorithm in \cite{hlr}. In the algorithm in \cite{hlr}, we assume that $C_1 \supset C_2$ and for this algorithm we assume that $ d_2 \ge 2 d_1$. Comparing the complexity of the algorithms: In the algorithm in \cite{hlr}, we should run $DC_1$ and $DC_2$ twice, in the worst case situation. For this algorithm, we run $DC_1$ twice and $DC_2$ once. Both algorithms decode up to the error-correction capability of the code.
\end{remark}

For $d_1$ odd and $d_2 > 2d_1$, the previous algorithm can also be used for correcting $t +1 = 2t_1 +1$ errors, that is, one more error than the error-correction capability of $C$. The algorithm outputs a list with one or two codewords, containing the sent word. Let us assume now that $wt(e) \le t+1$, again we may obtain $c_2$ by decoding $r_2 -r_1$ since $wt (e_2 -e_1) \le t_2$ because$$wt(e_2 -e_1) \le wt (e_1) + wt (e_2) = wt (e) \le t +1 = 2 t_1 +1 = d_1  < \frac{d_2}{2}.$$Again there will be an index $i_1 \in \{1,2\}$ such that $wt(e_{i_1}) \le t_1$ because otherwise $wt(e) \ge 2 t_1+2 > 2 t_1 + 1$. Hence, we also decode $r^{2)}_{i_2}$ using $DC_1$. Let $p=(c_1 , c_1 + c_2)$ and $p' = (c'_1 , c'_1 + c_2)$ as above, $d(p,r) \le t+1$ and $d(p',r) \ge t +1$. 

\begin{itemize}
\item $d( r,p)= wt( (c_1 + e_1 , c_1 + c_2 + e_2 ) -(c_1, c_1 + c_2) ) = wt (e) \le t  + 1$.

\item $d( r,p')= wt( (c_1 + e_1 , c_1 + c_2 + e_2 ) -(c'_1, c'_1 + c_2) ) = wt (c_1 - c'_1 + e_1 , c_1 - c'_1 + e_2) \ge   2 d_1 - wt(e) \ge  2 ( 2t_1+1) - (2t_1 + 1) = 2t_1 + 1= t+1 $.
\end{itemize}If we have that $d(p,r) , d(p',r) \le t+1$ we output both codewords, in other case we output only $p$. Note that the probability of having two codewords in the output list is negligible, since $d(r,p') = t+1$  if and only if $d(c_1 , c'_1)=d_1$ and for every $e_{j,i} \neq 0$, with $j=1,\ldots, m$, $i=1,2$, one has that $e_{j,i} = - (c_{j,i} -c'_{j,i})$.

We will consider in this article two different extensions of this algorithm for any $s$ and $l$, with $s \le l$. Namely, for the particular case where $s=l=2$, both extensions are equal.

\section{A decoding algorithm for Matrix-Product codes, first extension}\label{se:gene1}

In this section we propose an extension of the algorithm in the previous section for matrix-product codes with any $s \le l$, the algorithm in this section is less computationally intense than the algorithm in \cite{hlr} for large $s,l$. In the following section we will propose another extension. Let  $C= [C_1 \cdots C_s] \cdot A$ be a matrix-product code, with $d_i \ge l d_1$, for $i=2,\ldots,s$, and $d_1 \ge 3$, where $d_i = d(C_i)$ is the minimum distance of $C_i$.  We also require that  $A$ is non-singular by columns. 

The error-correction capability of $C_i$ is $t_i = \lfloor \frac{d_i -1 }{2} \rfloor \ge 1$. From (\ref{lowebound}), one has that the designed minimum distance of $C$ is $d(C) \ge d_C =\min \{ld_1, (l-1) d_2, \ldots , (l - s +1)  d_l\}= l d_1$. Hence, the designed error correction capability of the code $C$ is $$t=\left \lfloor\frac{l d_1 - 1}{2}\right \rfloor = \left\{ \begin{array}{ll} l t_1 + \lfloor \frac{l-1}{2} \rfloor & ~ \mathrm{if} ~ d_1 ~ \mathrm{is~odd}\\
l t_1 + l -1 & ~ \mathrm{if} ~ d_1 ~ \mathrm{is~even,}
\end{array} \right.$$ since $d_1 = 2 t_1 + 1$ if $d_1$ is odd and $d_1 = 2 t_1 + 2$ if $d_1$ is even.

We provide a decoding algorithm for the matrix-product code $C$ that decodes up to half of its designed minimum distance, assuming that we have a decoding algorithm $DC_i$ for $C_i$ which decodes up to $t_i$ errors, for $i=1,\ldots ,s$. A codeword in $C$ is an $m\times l$ matrix which has the form $ p = [c_1,\ldots,c_s] \cdot A = (\sum_{j=1}^s a_{j,1} c_j, \ldots , \sum_{j=1}^s a_{j,l} c_j )$, where $c_j \in C_j$, for all $j$. We denote by $p_i = \sum_{j=1}^s a_{j,i} c_j   \in \fq^{m}$ the $i$-th block of $p$, for $i=1,\ldots,l$. Suppose that $ p $ is sent and that we receive $r =  p  +  e $, where $ e =(e_1,e_2,\ldots,e_l)$  is an error vector, an $m\times l$ matrix, with weight $wt (e) \le t$.

Let $B$ be a matrix in $\mathcal{M}(\fq, l \times s)$, such that $A B$ is the $s\times s$-identity matrix. Such a matrix exists because $A$ has rank $s$ and it can be obtained by solving a linear system, but it is not unique if $s<l$. Let $w_i = (0 , \ldots , 0 ,1 , 0 , \ldots ,0)^T \in \fq^s$ be the vector that has all  coordinates equal to zero, excepting the $i$-th coordinate that  is equal to $1$. For  $i \in  \{2 , \ldots , s\}$, consider $v_i = (v_{1,i}, \ldots , v_{l,i})^T \in \fq^l$ equal to $v_i = B w_i$. One has that  $p  v_i = \sum_{j=1}^l v_{j,i} p_j = c_i$, since $p v_i = p B w_i = [c_1, \ldots , c_s] w_i= c_i$. Therefore$$r v_i=\sum_{j=1}^l v_{j,i} r_j = \sum_{j=1}^l v_{j,i} p_j + \sum_{j=1}^l v_{j,i} e_j= c_i + \sum_{j=1}^l v_{j,i} e_j.$$

For $i=2, \ldots, s$, we can decode $r v_i$ with the decoding algorithm $DC_i$ to obtain $c_i$, since $c_i \in C_i$ and $$wt \left(\sum_{j=1}^l v_{j,i} e_j \right)  \le \sum_{j=1}^l wt(e_j) = wt (e) \le t = \left\lfloor\frac{l d_1 - 1}{2}\right\rfloor \le \left\lfloor\frac{d_i - 1}{2}\right\rfloor =t_i$$

As we have already computed $c_2, \ldots, c_s$ we may consider now $r'_i = r_i - \sum_{j=2}^s a_{i,j} c_j = a_{1,i} c_1 + e_i$, for $i=1,\ldots,l$. We claim that there exists $i \in \{ 1, \ldots , l\}$ such that $w(e_i) \le t_1$ because if $wt(e_i ) > t_1$ for all $i$ then $wt(e) \ge lt_1 + l >t$. Therefore, we correct $r'_1/a_{1,1} , \ldots , r'_l/a_{1,l}$, with $DC_1$ and at least one of them gives $c_1$ as output. Note that $a_{1,i} \neq 0$, for $i=1, \ldots, l$ since $A$ is non-singular by columns. We have $l$ candidates for $c_1$, $c_1^{i)}=DC_1(r'_j/a_{1,i})$,  for $i=1,\ldots ,l$,  we can detect which candidate is equal to $c_1$ by checking that we have not corrected more than $\lfloor (d-1)/2\rfloor$ errors in total, that is, we check whether $d(r - [c^{i)}_1, c_2 \ldots , c_s] \cdot A) \le \lfloor (d-1)/2\rfloor $, for $i=1,\ldots,l$.

The algorithm is outlined as a whole in procedural form in Algorithm \ref{alg:dec1}.

\newcommand{\deccc}{\ensuremath{\mbox{\sc A decoding algorithm for $C=[C_1 \cdots C_s] \cdot A$, first extension}}}
\begin{algorithm}[h!]
  \caption{\deccc}
\algsetup{indent=2em}
  \begin{algorithmic}[1]\label{alg:dec1}
    \REQUIRE Received word $r = p + e$ with $c \in C$ and $wt(e) < d(C)/2$, where $d_i = d(C_i)$ with $l d_1 < d_i$ and $A$  full rank. Decoder $DC_i$ for code $C_i$, $i = 1,\ldots ,s$.
    \ENSURE $p$.
    \medskip
    \STATE $r'=r$;
    \STATE Find $B$, a right inverse of $A$ ($AB=Id$);    
    \FOR{$i=2, \ldots, s$}
         \STATE $v = B e_i$;
         \STATE $c_i = DC_i (rv)$;
    \ENDFOR
    \STATE $r=(r_1 - \sum_{j=2}^s a_{j,1} c_j, \ldots , r_l - \sum_{j=2}^s a_{j,l} c_j)$; 
    \FOR{$i=1, \ldots, l$}\label{li:indi}
         \STATE $c_1 = DC_1 (r_i/a_{1,i})$;
         \IF{$c_{1}=$ {\it ``failure"}}
                \STATE {\it Break} the loop and consider next $i$ in line \ref{li:indi};
         \ENDIF  
         \STATE $p=[c_1\cdots c_s] \cdot A$;
             \IF{$p \in C$ and $wt(r'-p ) \le \lfloor (d(C)-1)/2 \rfloor $ }\label{li:check}
             \RETURN  $p$;
             \ENDIF
    \ENDFOR

  \end{algorithmic}
\end{algorithm}

\begin{remark}
Let us compare this decoding algorithm to the algorithm in \cite{hlr}. In both algorithms we assume that $A$ is non-singular by columns. For the algorithm in this section, we assume that $l d_1 < d_i$ for all $i=2, \ldots , s$. In the algorithm in \cite{hlr}, we assume that $C_1 \supset \cdots \supset C_s$, therefore the bound in (\ref{lowebound}) is sharp. Hence, if $C_1, \ldots , C_s$ are nested, both algorithms decode up to half of the minimum distance of the matrix-product code.  In the algorithm in \cite{hlr}, we run $DC_i$ $\binom{l}{s}$ times, for $i=1, \ldots , s$, in the worst-case. However, in the algorithm presented in this section, we only run $DC_i$ once, for $i=2, \ldots , s$ and we run $DC_1$ $l$ times, thus it will have polynomial complexity if the algorithms $DC_i$ have polynomial complexity, for $i = 1, \ldots , s$ (since $B$ is computed by solving a linear system). Hence the algorithm in \cite{hlr} becomes computationally intense for large values of $s,l$ but this algorithm does not.
\end{remark}

We can also consider this algorithm for correcting beyond the designed error-correction capability of $C$, if $l$ is even, $d_1$ is odd and $d_i > l d_1$, for $i=2,\dots,s$. Namely, the designed error correction capability of $C$ is $l t_1 + \lfloor \frac{l-1}{2} \rfloor = l t_1 + (l-2)/2$ and we consider now an error vector with  $wt (e) <l t_1 + l/2$, that is, we are correcting $1$ error beyond the designed error correcting capability of $C$. We should just modify line \ref{li:check} in Algorithm \ref{alg:dec1} to accept codewords $p$ with $wt(r'-p ) \le l t_1 + l/2$ and create a list with all the output codewords.

Again, we can decode $r v_i$ with the decoding algorithm $DC_i$ to obtain $c_i$, since$$wt \left(\sum_{j=1}^l v_{j,i} e_j \right)   \le wt (e) \le l t_1 +  \frac{l}{2}= \frac{l}{2} (2 t_1 + 1) \le \frac{l}{2} d_1 < \frac{d_i}{2}.$$ Moreover, there  exists $i \in \{ 1, \ldots , l\}$ such that $w(e_i) \le t_1$ as well because if $wt(e_i ) > t_1$ for all $i$ then $wt(e) \ge lt_1 + l >  l t_1 + l/2$. As before, we have $l$ candidates for $c_1$ and at least one of them is $c_1$, however now we cannot uniquely determine it: let $p=[c_1, \ldots , c_s]\cdot A$ and $p' = [c'_1, c_2 \ldots , c_s] \cdot A$ with $c_1 \neq c'_1$, one has that $d(p,r) \le l t_1 + l/2$ and $d(p',r) \ge l t_1 + l/2$. 

\begin{itemize}
\item $d( r,p)=  	 wt (e) \le l t_1 + l/2$.

\item $d( r,p')= wt ( a_{1,1}(c_1 -c'_1) + e_1, \ldots , a_{1,l}(c_1 -c'_1) + e_l) \ge l d_1 - wt (e) \ge l (2 t_1 +1) -(lt_1 l/2) = lt_1 + l/2$.
\end{itemize}The algorithm outputs $p$ and all the other codewords -obtained from the other $l-1$ candidates- that are at distance at most $l t_1 + l -1$ from $r$. As with $s=l=2$, the probability of having more than one codeword  in the output list is negligible, since $d(r,p') = lt_1+l/2$  if and only if the bound in  (\ref{lowebound}) is sharp, $d(c_1 , c'_1)=d_1$ and for every $j=1,\ldots, m$, $i=1,\ldots,l$, with $e_{j,i} \neq 0$, one has that $e_{j,i} = - a_{1,i}(c_{j,1} -c'_{j,1})$.

\begin{example}\label{ex:PrimeraExt}

Consider the following linear codes over $\mathbb{F}_3$,
\begin{itemize}
\item $C_1$ the $[26,20,4]$ cyclic code generated by $f_1=x^6 + x^5 + 2x^4 + 2x^3 + x^2 + x + 2$. 
\item $C_2$ the $[26,7,14]$ cyclic code generated by
$f_2=x^{19} + x^{18} + x^{17} + x^{15} + 2x^{14} + x^{13} + 2x^{12} + x^{11} + 2x^8 + 2x^7 + x^6
    + x^4 + x^3 + 2$. 
\item $C_3$ the
$[26,3,18]$ cyclic code generated by $f_3= x^{23} + 2x^{22} + x^{21} + 2x^{19} + 2x^{18} + x^{17} + x^{16} + x^{15} + x^{13} + x^{10} +2x^9 + x^8 + 2x^6 + 2x^5 + x^4 + x^3 + x^2 + 1$.
\end{itemize}
Let $C=[C_1C_2C_3]\cdot A$, where $A$ is the non-singular by columns matrix\begin{equation*} A=\left(
\begin{tabular}{cccc}
1 & 1 & 1 \\
0 & 1 & 2 \\
0 & 0 & 1 \\
\end{tabular}\right).
\end{equation*}

We use decoder $DC_i$ for $C_i$, which decodes up to half the minimum distance, i.e., $DC_1$, $DC_2$, $DC_3$ decode up to $t_1=1$, $t_2=6$ and $t_3=8$
errors, respectively. We have that $d_C =3 d_1= 12$ and since $A$ is triangular we have that the minimum distance of $C$ is $d(C) = d_C = 12$  and we may correct up to $t=5$ errors in a codeword of $C$. Note that $12=3d_1 \le d_2,d_3$. 

We consider now polynomial notation for codewords of $C_i$, for all $i$. Hence  the codewords of length $23$ in $C_i$ are polynomials in $\fq [x]/(x^{23}-1)$ and the words in $C$ are elements in $(\fq [x]/(x^{23}-1))^3$. Note that $C$ is a quasi-cyclic code. Let ${r}= {p} + {e}$ be the received word, with codeword ${p}=(0,0,0)$  and the error vector of weight $t=5$ $$ {e}=(e_1,e_2,e_3)=(1+x,2x^2+x^7,2x^{11}).$$

The matrix 
\begin{equation*} B=\left(
\begin{tabular}{cccc}
1 & 2 & 1 \\
0 & 1 & 1 \\
0 & 0 & 1 \\
\end{tabular}\right)
\end{equation*}
verifies that $AB=I_3$. Then $v_2$ and $v_3$ are the second and third columns of $B$ respectively.
Therefore $rv_2=c_2+2e_1+e_2=2+2x+2x^2+x^7$ and $rv_3=c_3+e_1+e_2+e_3=1+x+2x^2+x^7+2x^{11}$.

\begin{itemize}
\item[$\bullet$] We decode  $rv_3$ with $DC_3$ and we obtain $c_3=0$ because $wt(e_1+e_2+e_3)\leq wt(e)=5<t_2=6$. 

\item[$\bullet$] We decode $rv_2$ with  $DC_2$ and we obtain $c_2=0$ because $wt(2e_1+e_2)\leq wt(e)=5<t_2=6$.  

\item[$\bullet$] Subtracting $c_2$ and $c_3$ from $r=(c_1+e_1,c_1+c_2+e_2,c_1+2c_2+c_3+e_3)$ we get 
$r'=(c_1+e_1,c_1+e_2,c_1+e_3)$. Moreover we know that either $r_1'=c_1+e_1$ or $r_2'=c_1+e_2$
or $r_3'=c_1+e_3$ can be decoded with $DC_1$, so we should decode these three words. The weight of $r_3'$ 
is $1$, since the minimum distance of $C_1$ is $4$ there is only one codeword at distance $1$ of the zero-codeword,
and thus $c_3=0$. In the other two cases ($r_1'$ and $r_2'$) the weight is $2$, thus the output of the decoding algorithm $DC_1$ in both cases is either zero if it is the only codeword at distance $2$ (from $r_1'$ and $r_2'$ respectively) or ``failure" if there is more than one codeword at distance $2$.
\end{itemize}
\end{example}

\section{A decoding algorithm for Matrix-Product codes, second extension}\label{se:gene2}

In this section, we consider another extension of the algorithm in section \ref{se:2} for arbitrary $s\le l$. This algorithm imposes softer conditions (than the one in previous section) on the minimum distance of the constituent codes, however it can become computationally intense for large $s$ or $l$. Let $C = [C_1 \cdots C_s] \cdot A$ be a matrix-product code, we shall assume that $A$ is non-singular by columns and that $ d_i \ge i d_1$, for $i=2,\ldots,s$, where $d_i = d(C_i)$ is the minimum distance of $C_i$. 

The error-correction capability of $C_i$ is $t_i = \lfloor \frac{d_i -1 }{2} \rfloor$. From (\ref{lowebound}), one has that the designed minimum distance of $C$ is given by $d(C) \ge d_C =\min \{ld_1, (l-1) d_2, \ldots , (l - s +1)  d_s\}$ and it is computed in the following lemma.

\begin{lemma}
Let $C = [C_1 \cdots C_s] \cdot A$ be a matrix-product code, with $A$ non-singular by columns and $d_i \ge i d_1$, for $i=2,\ldots,s$. The designed minimum distance of $C$ is $d_C = l d_1$.
\end{lemma}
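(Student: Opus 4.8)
The plan is to evaluate the minimum defining $d_C$ directly. Since $A$ is non-singular by columns, the first $i$ rows $R_1,\dots,R_i$ span an MDS code in $\fq^l$, so $D_i = l-i+1$ and, as recalled just before the lemma, $d_C = \min\{l d_1,(l-1)d_2,\dots,(l-s+1)d_s\}$. It therefore suffices to show that the first term is the smallest, i.e. that $(l-i+1)d_i \ge l d_1$ for every $i\in\{1,\dots,s\}$, with equality at $i=1$.

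The case $i=1$ is immediate: $(l-1+1)d_1 = l d_1$, so the minimum is at most $l d_1$ and this value is actually attained.

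For $i\ge 2$ I would use the standing hypothesis $d_i \ge i d_1$ to write $(l-i+1)d_i \ge (l-i+1)\,i\,d_1$, and then reduce to the purely numerical claim $(l-i+1)i \ge l$. This follows from the factorization
\[
(l-i+1)i - l = l(i-1) - i(i-1) = (i-1)(l-i),
\]
since $2\le i \le s \le l$ makes both factors $i-1$ and $l-i$ nonnegative. Hence $(l-i+1)d_i \ge l d_1$ for all $i\ge 2$ as well, and combining with the $i=1$ case gives $d_C = l d_1$.

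The argument is entirely elementary; the only point worth isolating is the identity $(l-i+1)i - l = (i-1)(l-i)$ together with the inequality $s\le l$ built into the definition of a matrix-product code, so I do not expect any real obstacle here.
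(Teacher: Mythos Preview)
Your proof is correct and follows essentially the same approach as the paper: reduce to the numerical inequality $i(l-i+1)\ge l$ for $2\le i\le s\le l$, which the paper derives via the equivalent chain $l\le i(l-i+1)\Leftrightarrow l(i-1)\ge i(i-1)\Leftrightarrow l\ge i$, while you obtain it from the factorization $(l-i+1)i-l=(i-1)(l-i)\ge 0$. The arguments are the same in substance; your version is slightly cleaner and also makes explicit that the minimum is attained at $i=1$.
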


\begin{proof}
We claim that $l d_1 \le (l-i+1)d_i$, for $i =2, \ldots ,s$. Since $i d_1 \le d_i$, we have that $i (l-i+1)  d_1 < (l - i +1 )d_i$. Hence, $l d_1 \le  i (l-i+1) d_1 < (l-i+1) d_i$ if and only if $l \le i (l-i+1)$. One has that $$l \le i (l-i+1) \Longleftrightarrow l (i-1) \ge i^2 -i \Longleftrightarrow l \ge \frac{i^2 -i}{i-1} = i
$$
Thus, the claim holds since $i \le s \le l$.

Finally, we have that$$d_C=\min \{ld_1, (l-1) d_2, \ldots , (l - s +1)  d_s\} = l d_1.$$ 

\end{proof}

Hence, the designed error correction capability of the code $C$ is $$t=\left \lfloor\frac{l d_1 - 1}{2}\right \rfloor = \left\{ \begin{array}{ll} l t_1 + \lfloor \frac{l-1}{2} \rfloor & ~ \mathrm{if} ~ d_1 ~ \mathrm{is~odd}\\
l t_1 + l -1 & ~ \mathrm{if} ~ d_1 ~ \mathrm{is~even,}
\end{array} \right.$$ because $d_1 = 2 t_1 + 1$ if $d_1$ is odd and $d_1 = 2 t_1 + 2$ if $d_1$ is even. 

As in previous sections, we provide a decoding algorithm for the matrix-product code $C$, that decodes up to half of its designed minimum distance, assuming that we have a decoding algorithm $DC_i$ for $C_i$ which decodes up to $t_i$ errors, for $i=1,\ldots ,s$. A codeword in $C$ is an $m\times l$ matrix which has the form $ p = [c_1 \cdots c_s] \cdot A = (\sum_{j=1}^s a_{j,1} c_j, \ldots , \sum_{j=1}^s a_{j,l} c_j )$, where $c_j \in C_j$, for all $j$. Suppose that  $p$ is sent and that we receive $r =  p  +  e $, where $ e =(e_1,e_2,\ldots,e_l)$ is an error vector, an $m\times l$ matrix, with weight $wt (e) \le t$.

In order to decode $r$, we compute $c_i$, for $i=s, s-1, \ldots , 1$, inductively. Then, after $s$ iterations we compute $p$ by $p = [c_1 \cdots c_s] \cdot A$. We will now show how $c_i$ is obtained,  assuming that we have already obtained $c_s, c_{s-1}, \ldots, c_{i+1}$ (for $i=s$, we do not assume anything). Let $r^{i)}= (\sum_{j=1}^i a_{j,1} c_j + e_1, \ldots , \sum_{j=1}^i a_{j,l} c_j + e_l)$. We can obtain $r^{i)}$ from $r$ and $c_s, c_{s-1}, \ldots, c_{i+1}$, since $r^{s)}=r$ and  $r^{i)}=(r_1^{i+1)} - a_{i+1,1} c_{i+1}, \ldots , r_l^{i+1)} - a_{i+1,l} c_{i+1})$ for $i=s-1, \ldots, 1$.

Let $A_i$ be the submatrix of $A$ consisting of the first $i$ rows of $A$. Note that  $A_s=A$ and $A_i$ is an $i \times l$-matrix that  is non-singular by columns. Let $v^{i)} \in \fq^l$ such that $A v^{i)} = w_i = (0, \ldots, 0,1)^T \in \fq^i$. Such a $v^{i)}$ is not unique in general (it is only unique if $i=s=l$). For the sake of simplicity we will denote the coordinates of $v^{i)}$ by $v^{i)} = (v_1 , \ldots, v_l)$. Note that $v^{i)}$ is a solution of the corresponding linear system\begin{equation}\label{eq:sis}A_i x =w_i\end{equation}

Since $A v^{i)} = w_i$, we have that $[c_1 \cdots c_i] \cdot A_i v^{i)} = [c_1 \cdots c_i] w_i = c_i$. Hence, $r^{i)} v^{i)} = c_i + \sum_{j=1}^l v_j e_j$, in particular for $i=s$, we have $r v^{i)} = c_s + \sum_{j=1}^l v_j e_j$. We may decode $r^{i)}v^{i)}$ with $DC_i$ to obtain $c_i$ if $wt( \sum_{j=1}^l v_j e_j ) < d_i/2$. Therefore, it is wise to consider a vector $v^{i)}$ with low weight, that is with many coordinates equal to zero. 

We will consider a vector $v^{i)}$ with at least $l-i$ coordinates equal to zero, i.e. of weight $wt(v^{i)}) \le l - (l-i)=i$. Let $J=\{j_1 , \ldots , j_i \} \subset \{ 1, \ldots , l \}$ with $\# J=i$, we claim that we can compute $v^{i)}$, a solution of (\ref{eq:sis}), such that $v_j = 0$ for $j \not\in J$. Let $A_J$ be the $i\times i$-submatrix of $A_i$ given by $A_J = (a_{k,j})_{k\in \{1,\ldots,i\}, j\in J}$. Since $A$ is non-singular by columns, one has that $A_J$ is a full rank squared matrix. Let us consider the linear system\begin{equation}\label{eq:sis2}A_J x =w_i,\end{equation}where $x \in \mathbb{F}_q^i$. The linear system (\ref{eq:sis2}) has a unique solution.  Let $v^{i)}_J = (v_1 \ldots, v_l)$, where $v_{j_k} = x_k$, for  $k = 1, \ldots ,i$, and $v_j =0$ otherwise. Then, $v^{i)}_J$ is a solution of (\ref{eq:sis}) of weight lower than or equal to $i$, and the claim holds.

There are several choices for the set $J \subset \{ 1, \ldots , l \}$. We will prove in Theorem \ref{th:goodd} that at least for one choice of $J$, we will obtain $c_i$ by decoding $r^{i)}v^{i)}_J$ with $DC_i$. Therefore, in practice, we should consider $\binom{l}{i}$ vectors $\{ v^{i)}_J\}_{J \in \mathcal{J}}$, with $\mathcal{J} = \{ J\subset \{1,\ldots,l\}:  \#J=i \}$ and decode $r^{i)}v^{i)}_J$ with $DC_i$. We will have, at most, $\binom{l}{i}$ different candidates for $c_i$ and at least one of them will give $c_i$ as output.

In order to obtain $c_{i-1}$ we should iterate this process for every candidate obtained for $c_i$. After considering the previous computations for $i= s, s-1, \ldots , 1$, we may have several candidates for $[c_1, \ldots , c_s]$.  We can detect which candidate is equal to $p$ by checking that we have not corrected more than $\lfloor (d(C)-1)/2 \rfloor$ errors in total, that is, we check if $d(r - [c_1 \ldots , c_s]\cdot A) \le \lfloor (d(C)-1)/2 \rfloor$. The algorithm can be seen in procedural form in Algorithm \ref{alg:dec2}. However, it remains to prove that, at least for one choice of the set $J \subset \{1, \ldots , l \}$, one will obtain $c_i$. 

\begin{theorem}\label{th:goodd}

\end{theorem}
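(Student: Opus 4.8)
The plan is to prove the following: in the notation set up above, for each $i \in \{s, s-1, \ldots, 1\}$, assuming $c_s, c_{s-1}, \ldots, c_{i+1}$ have already been recovered correctly, there exists a set $J \subset \{1, \ldots, l\}$ with $\#J = i$ for which $DC_i\bigl(r^{i)} v^{i)}_J\bigr) = c_i$. Since $DC_i$ corrects up to $t_i$ errors and, as computed just before the theorem, $r^{i)} v^{i)}_J = c_i + \sum_{j \in J} v_j e_j$ with $c_i \in C_i$, it suffices to exhibit a $J$ with $wt\bigl(\sum_{j \in J} v_j e_j\bigr) \le t_i$.

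First I would reduce this to a statement about the block weights $wt(e_j)$. Multiplication by a scalar does not increase Hamming weight (it either preserves it or kills the block), and weight is subadditive, so $wt\bigl(\sum_{j \in J} v_j e_j\bigr) \le \sum_{j \in J} wt(e_j)$. Now choose $J$ to consist of $i$ indices $j$ realizing the $i$ smallest values among $wt(e_1), \ldots, wt(e_l)$; this is a legitimate choice for \emph{every} such $J$ because, $A$ being non-singular by columns, the system (\ref{eq:sis2}) is uniquely solvable for each $J$ of size $i$, which is exactly what makes $v^{i)}_J$ available. For this $J$ the average of $\{wt(e_j)\}_{j \in J}$ is at most the average of $\{wt(e_j)\}_{j=1}^{l}$ (the indices left out are each at least as large as those kept), hence $\sum_{j \in J} wt(e_j) \le \frac{i}{l} \sum_{j=1}^{l} wt(e_j) = \frac{i}{l} wt(e)$.

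It remains to run the inequality chain. Using $wt(e) \le t = \lfloor (l d_1 - 1)/2 \rfloor$ and then the hypothesis $d_i \ge i d_1$,
\[
\sum_{j \in J} wt(e_j) \;\le\; \frac{i}{l}\left\lfloor\frac{l d_1 - 1}{2}\right\rfloor \;\le\; \frac{i}{l}\cdot\frac{l d_1 - 1}{2} \;<\; \frac{i d_1}{2} \;\le\; \frac{d_i}{2}.
\]
Since $\sum_{j \in J} wt(e_j)$ is a non-negative integer strictly smaller than $d_i/2$, it is at most $\lfloor (d_i - 1)/2 \rfloor = t_i$ (immediate on checking the two parities of $d_i$, using $d_1 = 2t_1+1$ or $2t_1+2$ as recorded above). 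Therefore $DC_i$ applied to $r^{i)} v^{i)}_J$ outputs $c_i$, proving the theorem. The final verification in Algorithm \ref{alg:dec2} — retaining only those candidates with $d\bigl(r - [c_1 \cdots c_s]\cdot A\bigr) \le \lfloor (d(C)-1)/2 \rfloor$ — then selects the sent codeword among the at most $\binom{l}{i}$ candidates produced at each level, by the usual uniqueness of decoding inside half the minimum distance.

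I do not expect a deep obstacle here; the content is the averaging step plus the parity bookkeeping. The one point that needs care is the \emph{inductive structure}: the identity $r^{i)} v^{i)}_J = c_i + \sum_{j \in J} v_j e_j$ presupposes that $c_{i+1}, \ldots, c_s$ are the genuine constituent codewords, so that the error blocks appearing in $r^{i)}$ are still exactly the original $e_1, \ldots, e_l$ with total weight still at most $t$. This is why the theorem must be combined with the candidate-branching: at each level at least one branch carries a correct tail $c_{i+1}, \ldots, c_s$, and the argument above applies on that branch. Finally, replacing the bound $wt(e) \le t$ by $wt(e) \le l t_1 + l/2$ throughout — valid when $l$ is even, $d_1$ is odd, and the inequalities $d_i > l d_1$ are strict — yields in the same way the variant that list-decodes one error beyond the designed capability.
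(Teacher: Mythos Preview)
Your argument is correct. You and the paper both rest on the same averaging identity
\[
\frac{1}{\binom{l}{i}}\sum_{J \in \mathcal{J}} \sum_{j \in J} wt(e_j) \;=\; \frac{i}{l}\, wt(e),
\]
but you use it constructively (the $i$ lightest blocks do at least as well as the average, hence at least as well as $\tfrac{i}{l}\,wt(e)$), while the paper argues by contradiction (if every $J$ failed, the average over $\mathcal{J}$ would be $\ge d_i/2$, forcing $d_i < i d_1$). The payoff of your route is that it names a specific $J$, so in principle the algorithm could go straight to that subset rather than looping over all $\binom{l}{i}$ of them; the paper's route is a hair shorter because it avoids checking the ``smallest $i$ beat the global mean'' inequality. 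One small slip in your closing remark: for the list-decoding variant the strict hypothesis in this section is $d_i > i\,d_1$, not $d_i > l\,d_1$.
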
 
Let $e$ with $wt (e)  \le t$. There exists $J \subset \{1, \ldots , l\}$, with $\# J =i$, such that $\sum_{j \in J} wt(e_{j}) < d_i/2$, for $i=1, \ldots , s$.
\begin{proof}  
Let  $v^{i)}_J = (v_1, \ldots ,v_l)$ as before.  We have that, 
$$wt\left( \sum_{j=1}^l v_j e_j \right) \le wt \left( \sum_{j \in J} e_{j} \right) \le  \sum_{j \in J} wt(e_{j}). $$

The result claims that there exists $J \subset \{1, \ldots , l\}$, with $\# J =i \in \{2, \ldots , s\}$, such that $\sum_{j \in J} wt(e_{j}) < d_i/2$. Let $\mathcal{J} = \{ J\subset \{1,\ldots,l\}:  \#J=i \}$, and let us assume that the claim does not hold. We consider every $\binom{l}{i}$ possible subset $J \subset \{1, \ldots, l\}$ with $i$ elements, then$$ \sum_{J \in \mathcal{J}} \sum_{j\in J} wt(e_j)  \ge \binom{l}{i} \frac{d_i}{2}.$$Moreover, since $\binom{l-1}{i-1}$ sets of $\mathcal{J}$ contain $j$, for $j \in \{1,\ldots, l\}$, we have $$  \sum_{J \in \mathcal{J}} \sum_{j\in J} wt(e_j) = \sum_{j=1}^l  \binom{l-1}{i-1} wt(e_j) =   \binom{l-1}{i-1} wt(e) < \binom{l-1}{i-1} \frac{l d_1}{2}.$$Which implies that $$ \binom{l}{i} d_i < \binom{l-1}{i-1} l d_1,$$ therefore $i d_1 > d_i$, contradiction.

For $i=1$, we have that  $\mathcal{J} = \{ \{1\}, \ldots , \{l\} \}$.Hence, we have that $r^{1)} v_J = c_j + e_j$, for $J=\{j\}$. The result claims that there exists $j \in \{1 , \ldots , l\}$, such that $wt (e_j) < d_1 / 2$. Otherwise, $wt (e) \ge l t_1 + l > t$, which is a contradiction.
 \end{proof}

\newcommand{\decc}{\ensuremath{\mbox{\sc A decoding algorithm for $C=[C_1 \cdots C_s] \cdot A$, second extension}}}
\begin{algorithm}[h!]
  \caption{\decc}
\algsetup{indent=2em}
  \begin{algorithmic}[1]\label{alg:dec2}
    \REQUIRE Received word $r = p + e$ with $c \in C$ and $wt(e) < d(C)/2$. Where $d_i = d(C_i)$ with $i d_1 < d_i$ and $A$  a non-singular by columns matrix. Decoder $DC_i$ for code $C_i$, $i = 1,\ldots ,s$.
    \ENSURE $p$.
    \medskip
    \STATE $r'=r$;
    \STATE $Candidates'=\{[0 \cdots 0 ]\}$ ($ 0 \in \mathbb{F}_q^m$); 
    \FOR{$i=s, s-1, \ldots, 2,1$}
        \STATE $Candidates = Candidates'$;
        \STATE $Candidates' = \{\}$;
        \FOR{$c=(c_1,\ldots,c_s)$ in $Candidates$}
            \STATE $r=(r'_1 - \sum_{j=i+1}^s a_{j,1} c_j, \ldots , r'_l - \sum_{j=i+1}^s a_{j,l} c_j)$;
            \FOR{$J \subset \{1,\ldots , l\}$ with $\# J =i$}\label{li:cc}
                 \STATE Solve linear system $A_J x = w_i$;
                 \STATE $v=(0,\ldots,0)$;
                 \FOR{$k=1\ldots , i$} 
                     \STATE $v_{j_k} = x_k$;
                 \ENDFOR
                 \STATE $b_i = DC_i (rv)$;
                 \IF{$b_{i}=$ {\it ``failure"}}
                       \STATE {\it Break} the loop and consider another $J$ in line \ref{li:cc};
                 \ENDIF          
                \STATE $Candidates' = Candidates' \cup \{ [0   \cdots 0 b_i c_{i+1} \cdots c_s] \}$;                   
             \ENDFOR
        \ENDFOR
    \ENDFOR
  
    \FOR{$c$ in $Candidates'$} 
       \STATE $p=[c_1\cdots c_s] \cdot A$;
        \IF{$p \in C$ and $wt(r-p ) \le \lfloor (d(C)-1)/2 \rfloor $ }\label{li:check2}
            \STATE RETURN: $p$;
    \ENDIF
  \ENDFOR
  \end{algorithmic}
\end{algorithm}

\begin{remark}
Let us compare this decoding algorithm to the algorithm in \cite{hlr}. In the algorithm in \cite{hlr}, we assume that $C_1 \supset \cdots \supset C_s$, $A$ is non-singular by columns and in the worst-case we run $DC_i$ $\binom{l}{s}$ times, for $i=1, \ldots , s$. For the algorithm in this section, we assume that $d_i \ge i d_1$ for all $i=2, \ldots , s$, $A$ is also non-singular but in the worst-case we run $DC_i$  $\prod_{j=i}^s \binom{l}{j}$ times. Thus, the algorithm presented in this section can become computationally intense for large values of $s,l$. If $C_1 , \ldots , C_s$ are nested, both algorithms decode up to half of the minimum distance of the code, since the bound in (\ref{lowebound}) is sharp. 
\end{remark}

As in previous sections, one can also consider this algorithm for correcting beyond the designed error-correction capability of $C$, if $l$ is even, $d_1$ is odd and $d_i > i d_1$. Namely, the designed error correction capability of $C$ is $l t_1 + \lfloor \frac{l-1}{2} \rfloor = l t_1 + (l-2)/2$ and we consider now an error vector with  $wt (e) <l t_1 + l/2$, that is, we are correcting $1$ error beyond the error correcting capability of $C$. We should just modify line \ref{li:check2} of Algorithm \ref{alg:dec2} to accept codewords $p$ with $wt(r'-p ) \le l t_1 + l/2$ and create a list with all the output codewords.

We shall prove that, at least for one choice of the set $J \subset \{1, \ldots , l \}$, one will again obtain $c_i$. 

\begin{theorem}\label{th:goodd2}

\end{theorem}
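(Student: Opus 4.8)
The statement to be established is the analogue of Theorem \ref{th:goodd} for the list-decoding regime described just above: assuming in addition that $l$ is even, $d_1$ is odd and $d_i > i d_1$ for $i = 2,\ldots,s$, and taking an error vector $e$ with $wt(e) \le l t_1 + l/2$ (one more than the designed correction capability $t = l t_1 + l/2 - 1$), there still exists $J \subset \{1,\ldots,l\}$ with $\# J = i$ such that $\sum_{j\in J} wt(e_j) < d_i/2$, for every $i = 1,\ldots,s$. This combinatorial existence claim is the only thing that needs proof: once it holds, the vector $v^{i)}_J$ of Section \ref{se:gene2} is supported on $J$, so $wt\bigl(\sum_{j=1}^l v_j e_j\bigr) \le \sum_{j\in J} wt(e_j) < d_i/2$, and hence $DC_i$ applied to $r^{i)} v^{i)}_J$ returns $c_i$. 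The plan is therefore to repeat the double-counting argument in the proof of Theorem \ref{th:goodd}, paying attention to the one place where the slightly larger error weight must be compensated by the strict distance hypothesis.

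For $i \in \{2,\ldots,s\}$ I would argue by contradiction. Suppose $\sum_{j\in J} wt(e_j) \ge d_i/2$ for every $J$ in $\mathcal{J} = \{J\subset\{1,\ldots,l\} : \# J = i\}$. Summing over the $\binom{l}{i}$ such sets, and using that each index $j$ lies in exactly $\binom{l-1}{i-1}$ of them, gives $\binom{l-1}{i-1}\, wt(e) = \sum_{J\in\mathcal{J}}\sum_{j\in J} wt(e_j) \ge \binom{l}{i}\, d_i/2$. Now comes the key arithmetic: since $l$ is even and $d_1 = 2t_1+1$, the weight bound reads $wt(e) \le l t_1 + l/2 = \frac{l}{2}(2t_1+1) = \frac{l d_1}{2}$. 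Substituting and using $\binom{l}{i} = \frac{l}{i}\binom{l-1}{i-1}$ yields $\frac{l}{i}\, d_i \le l d_1$, i.e. $d_i \le i d_1$, contradicting $d_i > i d_1$.

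The case $i = 1$ goes exactly as in Theorem \ref{th:goodd}: here $\mathcal{J} = \{\{1\},\ldots,\{l\}\}$, and for $J = \{j\}$ the error term in $r^{1)} v^{1)}_J$ has weight $wt(e_j)$, so the claim is that $wt(e_j) < d_1/2$ for at least one $j$. If not, then $wt(e_j) \ge d_1/2$ for all $j$, and since $d_1$ is odd this forces $wt(e_j) \ge t_1+1$, so $wt(e) \ge l(t_1+1) = l t_1 + l > l t_1 + l/2 \ge wt(e)$, a contradiction.

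The main (and really the only) subtlety worth isolating in the write-up is the contrast with Theorem \ref{th:goodd}: there $wt(e) \le t$ gives the strict inequality $2\,wt(e) < l d_1$, whereas here $wt(e) \le l t_1 + l/2$ only gives $2\,wt(e) \le l d_1$, so the averaging step concludes $d_i \le i d_1$ instead of $d_i < i d_1$. This is precisely why the hypotheses ``$l$ even, $d_1$ odd, $d_i > i d_1$'' are exactly what the one-error-beyond variant needs, and the argument cannot be pushed further. Everything after the theorem --- relaxing line \ref{li:check2} of Algorithm \ref{alg:dec2} to accept $p$ with $wt(r'-p) \le l t_1 + l/2$, the output being a list that contains the sent word, and that list being a singleton except in the negligible-probability configuration already identified for $s=l=2$ --- then transfers verbatim from the earlier sections and is not part of the proof of the theorem.
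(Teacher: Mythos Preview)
Your proof is correct and follows essentially the same double-counting argument as the paper: you assume no good $J$ exists, sum over all $i$-subsets, use that each index appears in $\binom{l-1}{i-1}$ of them to get $\binom{l}{i}\,d_i/2 \le \binom{l-1}{i-1}\,wt(e) \le \binom{l-1}{i-1}\,\tfrac{l d_1}{2}$, and simplify to $d_i \le i d_1$, contradicting the strict hypothesis. Your treatment of the case $i=1$ and your explicit remark on why the strict inequality $d_i > i d_1$ is now needed (versus $d_i \ge i d_1$ in Theorem~\ref{th:goodd}) are, if anything, more carefully written than the paper's own version.
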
 
Let $e$ with $wt (e)  \le l t_1 + l/2 $, with $d_1$ odd, $l$ even and $d_i > i d_1$. There exists $J \subset \{1, \ldots , l\}$, with $\# J =i$, such that $\sum_{j \in J} wt(e_{j}) < d_i/2$, for $i=1, \ldots , s$.
\begin{proof}  
Let  $v^{i)}_J = (v_1, \ldots ,v_l)$ as before.  We have that, 
$$wt\left( \sum_{j=1}^l v_j e_j \right) \le wt \left( \sum_{j \in J} e_{j} \right) \le  \sum_{j \in J} wt(e_{j}). $$

The result claims that there exists $J \subset \{1, \ldots , l\}$, with $\# J =i \in \{2, \ldots , s\}$, such that $\sum_{j \in J} wt(e_{j}) < d_i/2$. Let $\mathcal{J} = \{ J\subset \{1,\ldots,l\}:  \#J=i \}$, and let us assume that the claim does not hold. We consider every $\binom{l}{i}$ possible subsets $J \subset \{1, \ldots, l\}$ with $i$ elements, then$$ \sum_{J \in \mathcal{J}} \sum_{j\in J} wt(e_j)  \ge \binom{l}{i} \frac{d_i}{2}.$$Moreover, since $\binom{l-1}{i-1}$ sets of $\mathcal{J}$ contain $j$, for $j \in \{1,\ldots, l\}$, we have $$  \sum_{J \in \mathcal{J}} \sum_{j\in J} wt(e_j) = \sum_{j=1}^l  \binom{l-1}{i-1} wt(e_j) =   \binom{l-1}{i-1} wt(e) \le \binom{l-1}{i-1} (lt_1 + \frac{l}{2}).$$Which implies that, $$ \binom{l}{i} \frac{d_i}{2} \le \binom{l-1}{i-1} \frac{l}{2}(2t_1 +1),$$ therefore $d_i \le  i (2t_1 +1) = i d_1$, contradiction.

For $i=1$, we have that  $\mathcal{J} = \{ \{1\}, \ldots , \{l\} \}$. Therefore, we have that $r^{1)} v_J = c_j + e_j$, for $J=\{j\}$. The result claims that there exist $j \in \{1 , \ldots , l\}$, such that $wt (e_j) < d_1 / 2$. Otherwise, $wt (e) \ge l t_1 + l >  l t_1 + l/2 $, which is a contradiction.
 \end{proof}

This algorithm will output  a list containing the sent word, however it cannot be uniquely determined: let $p=[c_1, \ldots , c_s]\cdot A$ and $p' = [c'_1, c_2 \ldots , c_s]\cdot A$ with $c_1 \neq c'_1$, we claim that $d(p,r) \le l t_1 + l/2$ and $d(p',r) \ge l t_1 + l/2$.

\begin{itemize}
\item $d( r,p)=  	 wt (e) \le l t_1 + l/2$.

\item $d( r,p')= wt ( a_{1,1}(c_1 -c'_1) + e_1, \ldots , a_{1,l}(c_1 -c'_1) + e_l) \ge l d_1 - wt (e) \ge l (2 t_1 +1) -(lt_1 l/2) = lt_1 + l/2$.
\end{itemize}The algorithm outputs $p$ and all the other codewords -obtained from the other candidates- that are at distance at most $l t_1 + l -1$ from $r$. As with $s=l=2$, the probability of having more than one codeword in the output list is negligible, since $d(r,p') = lt_1+l/2$  if and only if the bound in  (\ref{lowebound}) is sharp, $d(c_1 , c'_1)=d_1$ and for  every $j=1,\ldots, m$, $i=1,\ldots,l$, with $e_{j,i} \neq 0$, one has that $e_{j,i} = - a_{1,i}(c_{j,1} -c'_{j,1})$.

\begin{example}
Consider the following linear codes over $\mathbb{F}_3$,
\begin{itemize}
\item $C_1$ the $[26,16,6]$ cyclic code generated by $f_1=x^{10} + 2x^7 + 2x^4 + x^3 + 2x^2 + x + 2$. 
\item $C_2$ the $[26,7,14]$ cyclic code generated by
$f_2=x^{19} + x^{18} + x^{17} + x^{15} + 2x^{14} + x^{13} + 2x^{12} + x^{11} + 2x^8 + 2x^7 + x^6
    + x^4 + x^3 + 2$. 
\item $C_3$ the
$[26,3,18]$ cyclic code generated by $f_3= x^{23} + 2x^{22} + x^{21} + 2x^{19} + 2x^{18} + x^{17} + x^{16} + x^{15} + x^{13} + x^{10} +2x^9 + x^8 + 2x^6 + 2x^5 + x^4 + x^3 + x^2 + 1$.
\end{itemize}
Let $C=[C_1C_2C_3]\cdot A$, where $A$ is the non-singular by columns matrix \begin{equation*} A=\left(
\begin{tabular}{ccc}
1 & 1 & 1 \\
0 & 1 & 2 \\
1 & 0 & 1 \\
\end{tabular}\right),
\end{equation*}we consider again polynomial notation for $C$ (see example \ref{ex:PrimeraExt}). We use decoder $DC_i$ for $C_i$, which decodes up to half the minimum distance, i.e., $DC_1$, $DC_2$, $DC_3$ decode up to $t_1=2$, $t_2=6$ and $t_3=8$
errors, respectively. Note that $2 d_1 = 12 \le 14 = d_2$ and $3d_1 = 18 \le 18 = d_3$. We have that 
$d(C)= d_C= 3 d_1= 18$. Therefore we may correct up to $t=8$ errors in a codeword of $C$. 

Let ${r}= {p} + {e}$ be the received word, with codeword ${p}=(0,0,0)$  and the error vector of weight $t=8$ $$ {e}=(e_1,e_2,e_3)=(1+x+x^2,1+2x^2+x^7,x^5+2x^{11}).$$

We solve the system 
\begin{equation*} \left(
\begin{matrix}
1 & 1 & 1 \\
0 & 1 & 2 \\
1 & 0 & 1 \\
\end{matrix}\right)\left(\begin{matrix} 
x  \\
y  \\
z  \\
\end{matrix}\right)=\left(\begin{matrix}
0  \\
0  \\
1  \\
\end{matrix}\right).
\end{equation*}which has solution $(2,2,2)^T$. Set $r^{3)}=r$ and $v^{3)}_{\{1,2,3\}}=(2,2,2)$. Therefore $r^{3)}v^{3)}_{\{1,2,3\}}= c_3 + 2e_1 +2 e_2 + 2e_3$. Since $DC_3$ can correct up to $8$ errors and $wt(-e_1-e_2-e_3) =8$, we have $$DC_3(r^{3)}v^{3)}_{\{1,2,3\}})=c_3=0$$

Removing $c_3$ in $r^{3)}$, we obtain $r^{2)} =(r_1^{3)} - c_{3}, \ldots , r_l^{3)} - c_{3}) = r$. Since there are $3$ possible sets, $\{1,2\}, \{1,3\}, \{2,3\} \subset \{ 1,2,3 \}$ , with $2$ elements, we solve the corresponding systems of equations give by (\ref{eq:sis2}):
 
\begin{equation*} \left(
\begin{matrix}
1 & 1  \\
0 & 1  \\
\end{matrix}\right)\left(\begin{matrix}
x  \\
y  \\
\end{matrix}\right)=\left(\begin{matrix}
0  \\
1  \\
\end{matrix}\right),
\end{equation*}

\begin{equation*} \left(
\begin{matrix}
1 & 1  \\
0 & 2  \\
\end{matrix}\right)\left(\begin{matrix}
x  \\
y  \\
\end{matrix}\right)=\left(\begin{matrix}
0  \\
1  \\
\end{matrix}\right),
\end{equation*}

\begin{equation*} \left(
\begin{matrix}
1 & 1  \\
1 & 2  \\
\end{matrix}\right)\left(\begin{matrix}
x  \\
y  \\
\end{matrix}\right)=\left(\begin{matrix}
0  \\
1  \\
\end{matrix}\right).
\end{equation*}

These systems have solution $(0,1)^T$, $(0,2)^T$ and $(2,1)^T$ respectively. Therefore, $v^{2)}_{1,2}=(0,1,0)$, $v^{2)}_{1,3}=(0,0,2)$ and $v^{2)}_{2,3}=(0,2,1)$. Thus $r^{2)}v^{2)}_{\{1,2\}}=c_2+e_2$, $r^{2)}v^{2)}_{\{1,3\}}=c_2+2e_3$ and $r^{2)}v^{2)}_{\{2,3\}}=c_2+2e_2+e_3$. Since $t_2=6$ and $wt(e_2)=3\le 6$, $wt(2e_3)=2\le 6$ and $wt(2e_2+e_3)\leq wt(e_2)+wt(e_3)=5 \le 6$, we have  $$DC_2(r^{3)}v^{3)}_J)=c_2=0, \mathrm{for~} J=\{1,2\}, \{1,3\}, \{2,3\}.$$

Therefore, we only have one candidate for $c_2$. Removing $c_2$ in $r^{2)}$, we obtain $r^{1)} =(r_1^{2)} - c_{2}, \ldots , r_l^{2)} - c_{2}) = r$. Since there are $3$ possible sets, $\{1\}, \{2\}, \{3\} \subset \{ 1,2,3 \}$, with $1$ element, we solve the corresponding systems of equations give by (\ref{eq:sis2}). In this case the 3 systems of equations are
  
\begin{equation*} \left(
\begin{matrix}
1  \\
\end{matrix}\right)\left(\begin{matrix}
x  \\
\end{matrix}\right)=\left(\begin{matrix}
1  \\
\end{matrix}\right).
\end{equation*}

Thus, the solution is $(1)$ and  $v^{1)}_{\{1\}}=(1,0,0)$, $v^{1)}_{\{2\}}=(0,1,0)$ and $v^{1)}_{\{3\}}=(0,0,1)$. Thus 
 $r^{1)}v^{1)}_{\{1\}}=c_1+e_1$, $r^{1)}v^{1)}_{\{2\}}=c_1+e_2$ and $r^{1)}v^{1)}_{\{3\}}=c_1+e_3$. We consider $DC_1(r^{3)}v^{3)}_J)$: we obtain ``failure" for $DC_1(r^{3)}v^{3)}_{\{1\}})$ and $DC_1(r^{3)}v^{3)}_{\{2\}})$ since $e_1$ and $e_2$ have weight $3$ and there is no codeword at distance $2$ because $C_1$ has minimum distance $6$. One has that $wt(e_3) = 2 \le t_1$, therefore $$DC_1(r^{3)}v^{3)}_{\{3\}}) = c_1 = 0$$

Finally we get $p = [c_1 c_2 c_3] \cdot A = (0,0,0)$.
\end{example}

\section*{Acknowledgements}
This paper was written in part during a visit of the second author to the Mathematics department of San Diego State University. He wishes to thank this institution and the first author for hospitality.
 
\bibliographystyle{plain}
\bibliography{HRarxiv}

\end{document}